\newcommand{\blind}{1}
\newtheorem{lem}{Lemma}[section]
\theoremstyle{definition}
\theoremstyle{remark}
\numberwithin{equation}{section}
\begin{document}

\def\spacingset#1{\renewcommand{\baselinestretch}%
{#1}\small\normalsize} \spacingset{1}


\if1\blind
{
	\title{\bf Accurate Estimates of Ultimate 100-Meter Records}
	\author{
		John H.J.\ Einmahl \\
		Department of Econometrics and Operations Research, 
		Tilburg University
		\and
		Yi He\\
		Amsterdam School of Economics, University of Amsterdam}
	\maketitle
} \fi

\if0\blind
{
	\bigskip
	\bigskip
	\bigskip
	\begin{center}
		{\LARGE\bf  Accurate Estimates of Ultimate 100-Meter Records}
	\end{center}
	\medskip
} \fi
\bigskip

\bigskip
\begin{abstract}
We employ the novel theory of heterogeneous extreme value statistics to accurately estimate the ultimate world records for the 100-m running race, for men and for women. For this aim we collected data from  1991 through 2023 from thousands of top athletes, using multiple fast times per athlete. We consider the left endpoint of the probability distribution of  the running times of a top athlete and define the ultimate world record as the minimum, over all top athletes, of all these endpoints.
For men we estimate the ultimate world record to be 9.56 seconds. More prudently, employing this heterogeneous extreme value theory we construct an accurate asymptotic 95\% lower confidence bound on the ultimate world record of 9.49 seconds, still quite close to the present world record of 9.58.
For the women's 100-meter dash our point estimate of the ultimate world record is 10.34 seconds, somewhat lower than the world record of 10.49. The  more prudent 95\% lower confidence bound on the women's ultimate world record is 10.20. 
\end{abstract}

\noindent%
{\it Keywords:}  Endpoint estimation; extreme value statistics; heterogeneous data;  100-m running.
\vfill

\newpage
\spacingset{1.2} 

\section{Introduction}

Athletics, often referred to as the ``mother of all sports'', consists of many events involving running, jumping, and throwing. In particular,  the 100-meter dash at the Olympic Games or World Championships gets a lot of attention all over the world. 
In this paper we therefore focus on the 100 meters for both
men and women. We would like to answer the question how fast the 100m can be run, that is, we are interested in the ultimate world record under the current conditions. 

A straightforward approach to answer this question consists 
of plotting the historical progression of world records and extrapolating into the future.
However, such a method relies on few data and therefore yields inaccurate estimates. Moreover, it fails to address the  question of the fastest possible time achievable “tomorrow” instead of  that in the distant future.

To achieve a  much more precise answer our approach uses extreme value statistics and is based on thousands of data. That approach is also used in, e.g., Einmahl and Magnus
(2008) to estimate ultimate world records, but here we employ the recent, more
refined methods based on heterogeneous extreme value statistics introduced  in He and
Einmahl (2024). The heterogeneous data approach  leads to lower variances
and hence more accurate estimates of ultimate records. This improvement
is substantially enhanced by the fact that we enlarge the sample size very much by using
multiple running times of one athlete instead of only the personal best time.

We model these multiple running times for the \(\ell\)-th athlete as independent and identically distributed samples from some distribution function with a positive left endpoint \(\alpha_\ell\); times from different athletes are assumed to be independent. Our objective is to accurately estimate the ultimate world record, defined as \(\alpha = \min_\ell \alpha_\ell\), and to provide sharp 95\% lower confidence bounds for \(\alpha\).
Our findings reveal that the current world records, held by Usain Bolt and Florence Griffith Joyner, are rather close to the estimated ultimate world records, highlighting their extraordinary achievements.

This paper is organized as follows. In Section 2 the data that we use in the analysis are described. In Section 3 we present the extreme value theory for heterogeneous data. The results for the 100-m running are presented in Section 4. Section 5 provides a summary and conclusion and finally in the Appendix a novel   lemma  to quantify  tail heterogeneity is stated and   proved. 

\section{The 100-meter Sprint Race Data}\label{sec:data}
We obtained the annual best performances of 100-meter athletes from the ``All-Time Top Lists'' on the World Athletics website:  
\begin{center}  
	\url{https://worldathletics.org.}  
\end{center}  
For each athlete, we used multiple best annual records, with up to five observations per athlete, from the  years 1991 through 2023. The analysis was conducted separately for men and women, yielding a total of 5618 male athletes and 2528 female athletes. Table \ref{tab:summary} shows some summary statistics of our dataset.
\begin{table}[!h]
	\centering
	\caption{Comparison of our (EH) data with those in \cite{EM2008}}
	\label{tab:summary}
\begin{tabular}{cccccccccc} \hline 
	~&\multicolumn{4}{c}{Male} &~ &  \multicolumn{4}{c}{Female}  \\\cline{2-5}\cline{7-10} 
	&	Athletes & Records  &    Best& 970th &~& Athletes & Records  &  Best & 578th  \\ 
	EH&5618 & 25244 & 9.58 & 10.09 & ~ & 2528 & 11654 & 10.54 & 11.09 \\ 
	EM2008&970 & 970 & 9.78 & 10.30 & ~ & 578 & 578 & 10.49 & 11.38 \\ 
	\hline 
\end{tabular}
\end{table}

This dataset includes approximately five times as many athletes as \cite{EM2008},  abbreviated as EM2008, in which only the personal best time of each athlete was considered. Overall, our dataset consists of 25244 observations for male athletes and 11654 observations for female athletes, making it more than  20 times as large as that   of EM2008.

To mitigate rounding errors, we followed  EM2008 and smoothed  equal times into  the corresponding rounding interval. 
For example, if there are \( m \) observations of 11.05 seconds across the female athletes, these \( m \) times are smoothed over the interval \( (11.045, 11.055) \) using the formula:  
\begin{equation*}
	 11.045 + 0.01 \frac{2j - 1}{2m}, \quad j = 1, \ldots m,	
\end{equation*}
(where the time with $j=1$  corresponds to the lowest measured wind speed, etc.). Also, for the analysis we will convert time measurements (in seconds) into  corresponding average speed values (in kilometers per hour). However, the final results  will be converted back to time measurements.

\section{Methodology in a General Framework}
\subsection{Detecting Heterogeneity using Multiple Records}\label{het}
Our statistical methodology is based on extreme value theory. In the next subsection we will present that general theory for heterogeneous data,  but first we describe specifically how to quantify the  tail heterogeneity when using multiple data for each athlete. 
Consider possibly heterogeneous, independent speed data $X_1^{(n)}, \ldots, X_n^{(n)}$, where $n$ is the total sample size. These data belong to $p$ athletes. The $\ell$-th athlete ($\ell =1, \ldots, p$) has $m_{\ell}\geq 1$ speed data, hence  $n = \sum_{\ell=1}^{p} m_{\ell} \geq p$. We assume $\max_{\ell=1, \dots, p}m_\ell$ stays bounded in the asymptotic theory.
The indices are such that the records for each athlete are grouped consecutively: for the $\ell$-th athlete the data are $X_{r_{\ell}+1}, \dots, X_{r_{\ell}+ m_{\ell}}$, for some index  $r_{\ell}$. 
Denote the distribution function of $X_i^{(n)}$ by $F_{ni}, i=1, \dots,  n$, and  note that there are at most $p$ different distribution functions. The average distribution function is  given by
$F_n= \frac{1}{n}\sum_{i=1}^{n} F_{ni}$
and is assumed to be continuous.

According to \cite{HE2024}, the  heterogeneity in the right tail is   characterized by the  function:
\begin{equation}\label{lambda-func:general}
	\lambda(u)= \lim\limits_{n \rightarrow \infty} \frac{1}{k} \sum_{i=1}^{n} P\left(U_i^{(n)} < \frac{k}{n}\right) P\left( U_{i}^{(n)} < \frac{k}{n} \frac{1}{u} \right),~U_i^{(n)} := 1 - F_n(X_i^{(n)}), \quad u>0, 
\end{equation}
with $k = k(n)$  any intermediate sequence, that is:
\begin{equation}\label{k:p}
	k \to \infty, \quad k/n \to 0, \quad \text{as } n \to \infty.
\end{equation}
It is worth noting that $\lambda$ is either positive for all $u > 0$ or zero  for all $u > 0$. For identically distributed (homogeneous) data, $\lambda(u)= 0$ for all $u > 0$. In general, it is also possible for heterogeneous data to have $\lambda \equiv 0$. 
In such cases, the basic asymptotic theory  for homogeneous data remains valid; see, e.g., \cite{EHZ2016}. However, when $\lambda(u) > 0$, the asymptotic variance of the  endpoint estimator changes, making novel methods necessary.

Suppose $m_{\ell} \geq 2$ for each $\ell $.  (If this is not the case for relatively  few values of $\ell$,  drop these records or  duplicate them to ensure that $m_{\ell} \geq 2$ for all athletes after this adjustment.) 
We then obtain an alternative expression of \eqref{lambda-func:general}:
\begin{gather}	\label{lambda-func-defn:estimable}
	\lambda(u) 
	=\lim\limits_{n \rightarrow \infty} \frac{1}{k} \sum_{\ell=1}^{p} \Lambda_{\ell}(u),~
	\Lambda_{\ell}(u)= 	\frac{1}{m_{\ell} - 1} \sum_{1 \leq j_1 \neq j_2 \leq m_{\ell}} P\left( U_{r_{\ell}+j_1}^{(n)} < \frac{k}{n}, U_{r_{\ell}+j_2}^{(n)} < \frac{k}{n} \frac{1}{u} \right).
\end{gather}
Now, taking the empirical analogue of \eqref{lambda-func-defn:estimable} gives us a novel, nonparametric estimator of $\lambda(u)$. Denote the (ascending) ranks of the speeds $X_{i}^{(n)}$ as $R_1, \ldots, R_n$. 
For the $\ell$-th athlete's i.i.d.\ data of size $m_{\ell} $, $X_{r_\ell+1}, \ldots, X_{r_\ell+m_{\ell}}$, we estimate $\Lambda_{\ell}(u)$ by
\begin{equation}\label{eqn:estimator-lambda-individual}
	\widehat{\Lambda}_{\ell}(u) = \frac{1}{m_{\ell}-1} \sum_{1 \leq j_1 \neq j_2 \leq m_{\ell}} \mathds{1} \left[R_{r_\ell+j_1} > n-k, \, R_{r_\ell+j_2} > n-\frac{k}{u} \right].
\end{equation}
From this, $\lambda(u)$  is estimated by:
\begin{equation}\label{eqn:estimator-lambda}
	\widehat{\lambda}(u) = \frac{1}{k} \sum_{\ell=1}^{p} \widehat{\Lambda}_\ell(u),\quad u>0.
\end{equation}
We need and will show the uniform consistency of $\widehat{\lambda}$, that is, for a sequence $k = k(n)$ satisfing (\ref{k:p}) and  for each $\delta>0$, as $n \to \infty$,
	\begin{equation}\label{uclh}
		\sup_{u \geq \delta}\, \left| \,\widehat{\lambda}(u) - \lambda(u) \, \right| \xrightarrow{P} 0,
	\end{equation}
where $\xrightarrow{P}$ denotes convergence in probability.
    A generalization of this novel result will be formulated precisely and proved in the Appendix.

\subsection{Accurate Confidence Bounds on Ultimate Records}\label{sec:evt-estimator}
In this section, we describe how to improve statistical inference by incorporating heterogeneity. We employ   the extreme value theory for heterogeneous data in \cite{HE2024}. There the first-order condition in extreme value theory, which is classically applied to homogeneous data,  accounts for heterogeneous data: the average survival function $T_n = 1 - F_n$ of the speed data converges to a limiting survival function $T = 1 - F$ in the tail, where $F$ has an extreme value index $\gamma < 0$. More precisely:
\begin{enumerate}[(a)]
	\item There exist sequences $a_t$ and $b_t$ such that:
	\begin{equation}\label{eqn:DoA-limit-survival}
		\lim\limits_{t \to \infty} t T(a_t x + b_t) = (1 + \gamma x)^{-1/\gamma}, \quad x < -1/\gamma.
	\end{equation}
	\item For all sequences $t = t(n) \uparrow x^*:=\sup\{x : F(x) <1\} < \infty$, with $n T(t) \to \infty$,
	\begin{equation*}
		\frac{T_n(t)}{T(t)} \to 1.
	\end{equation*}
	\item There exists a positive constant $M > 0$ such that for all sufficiently large $t$ and $n$, $T_n(t) \leq M T(t)$.
\end{enumerate}

Define the right endpoint of the speed distribution $\widetilde{F}_{\ell}$ for the $\ell$-th athlete by
\begin{equation*}
x_{p,\ell}^*:=\sup\{x:\widetilde{F}_{\ell}(x)<1\}.
\end{equation*}
Then, the right endpoint of the average distribution  $F_n$ is equal  to their maximum, namely
\begin{equation*}
	x_n^*:=\sup\{x:F_n(x)<1\}=\max_{ \ell=1, \ldots,  p}x_{p,\ell}^*.
\end{equation*}
Observe that $x_n^*=x^*$ for homogeneous data  $T_n\equiv T$. In general, for heterogeneous data, it follows from the assumptions that:
\begin{equation*}
	x_n^*\leq x^*,\quad\text{and}\quad x_n^*\to x^* \quad \text{as}~n\rightarrow\infty.
\end{equation*}
Hence although it is allowed that $x_n^*<x^*$, a further assumption in \cite{HE2024} entails that  $x_n^*\to x^*$ fast enough to ensure that they are asymptotically indistinguishable. We define the ultimate (speed) world record as $x_n^*$, but the estimates and confidence bounds will be the same as if we would define it as $x^*$.

Now, take an intermediate sequence $k=k(n)$ as in \eqref{k:p}, with $k\in\{1, \ldots, n-1\}$. We use the classical estimator of the right endpoint in the homogeneous case, based on the moment estimator $\widehat{\gamma} = \widehat{\gamma}_{M}$ in  \cite{DEdH1989}. The moment estimator is defined as
$$\widehat{\gamma}_{\operatorname{M}} = 
	 M_n^{(1)} +  1 - \frac{1}{2} \left(1 - \frac{(M_n^{(1)})^2}{M_n^{(2)}} \right)^{-1}=: M_n^{(1)} +  1 -V_n  $$
    where
\begin{equation*}
	M_n^{(r)} = \frac{1}{k} \sum_{i=0}^{k-1} \left(\log X_{n-i,n} - \log X_{n-k,n} \right)^{r}, \quad r = 1,2,
\end{equation*}
and $X_{1,n} \leq \ldots \leq X_{n,n}$ are the order statistics of the speed data $\{X_i^{(n)}\}$,
    and the endpoint estimator is defined as 
    \begin{equation}\label{eqn:endpoint-estimator}
    	\widehat{x}^*=X_{n-k,n}\left(1-\frac{M_n^{(1)}
    V_n}{\widehat{\gamma}_{\operatorname{M}}}\right). \end{equation}

Under appropriate regularity conditions, in  \cite{HE2024}  
it is shown that 
$$\left(\frac{\widehat{\gamma}_{\operatorname{M}}^2}{M_n^{(1)}
V_n}-\widehat{\gamma}_{\operatorname{M}}\right)\sqrt{k}(\log\widehat{x}^* - \log x^*_n)\stackrel{d}{\to} N\left(0,\sigma^2_{\text{iid}}(\gamma)(1 - \Delta)\right), \mbox{ as } n\to \infty,$$
where $\xrightarrow{d}$ denotes convergence in distribution. 
Here $$\sigma^2_{\text{iid}}(\gamma)=\frac{\left(1-\gamma \right)^2(1-3\gamma+4\gamma^2)}{(1-2\gamma)(1-3\gamma)(1-4\gamma)}$$ is the variance for i.i.d.\ data (used in EM2008), $\Delta$ quantifies the relative variance reduction due to heterogeneity  given by
\begin{equation}\label{eqn:var-loss-endpoint}
	\Delta = w_0(\gamma)\lambda(1) + w_1(\gamma)m_{\lambda}(-\gamma) + w_2(\gamma)m_{\lambda}(-2\gamma),
\end{equation}
where
\begin{equation*}
	\begin{bmatrix}
		w_0(\gamma) \\
		w_1(\gamma) \\
		w_2(\gamma)
	\end{bmatrix}
	=
	\frac{1}{1 - 3\gamma + 4\gamma^2}
	\begin{bmatrix}
		(1 - 2\gamma)(1 - 3\gamma)(1 - 4\gamma) \\
		-2\gamma(1 - \gamma)(1 - 4\gamma) \\
		8\gamma(1 - 2\gamma)^2
	\end{bmatrix}, \quad
	w_0(\gamma) + w_1(\gamma) + w_2(\gamma) = 1,
\end{equation*}
and, with $\lambda$ as in \eqref{lambda-func:general}, 
\begin{equation*}
	m_{\lambda}(x) = (1 + x) \int_{0}^{1} u^x \lambda(u) \, du.
\end{equation*}
 The variance reduction can be consistently estimated with
\begin{equation}\label{Delta}
	\widehat{\Delta} = w_0(\widehat{\gamma}_{\operatorname{M}})\widehat{\lambda}(1) + w_1(\widehat{\gamma}_{\operatorname{M}})m_{\widehat{\lambda}}(-\widehat{\gamma}_{\operatorname{M}}) + w_2(\widehat{\gamma}_{\operatorname{M}})m_{\widehat{\lambda}}(-2\widehat{\gamma}_{\operatorname{M}}),
\end{equation}
where  $m_{\widehat{\lambda}}$ is obtained by estimating $\lambda$ with $\widehat \lambda$ in (\ref{eqn:var-loss-endpoint}).  This allows for the construction of  asymptotically correct confidence intervals, which are typically  substantially narrower than those based on  i.i.d.\ data.

\section{Ultimate World Records}
We now apply the estimators from the previous section to the 100-meter sprint race data discussed in Section \ref{sec:data} in order to address the following questions for males and females:
\begin{itemize}
	\item[$\bullet$] How tail heterogeneous are the 100-meter athletes?\end{itemize}
    and  foremost 
    \begin {itemize}
	\item[$\bullet$] How fast could  the  100-meter potentially be run now?
\end{itemize}

\subsection{Top Athletes are Heterogeneous}
\begin{figure}[!h]
	\centering
	\includegraphics[width=1\linewidth]{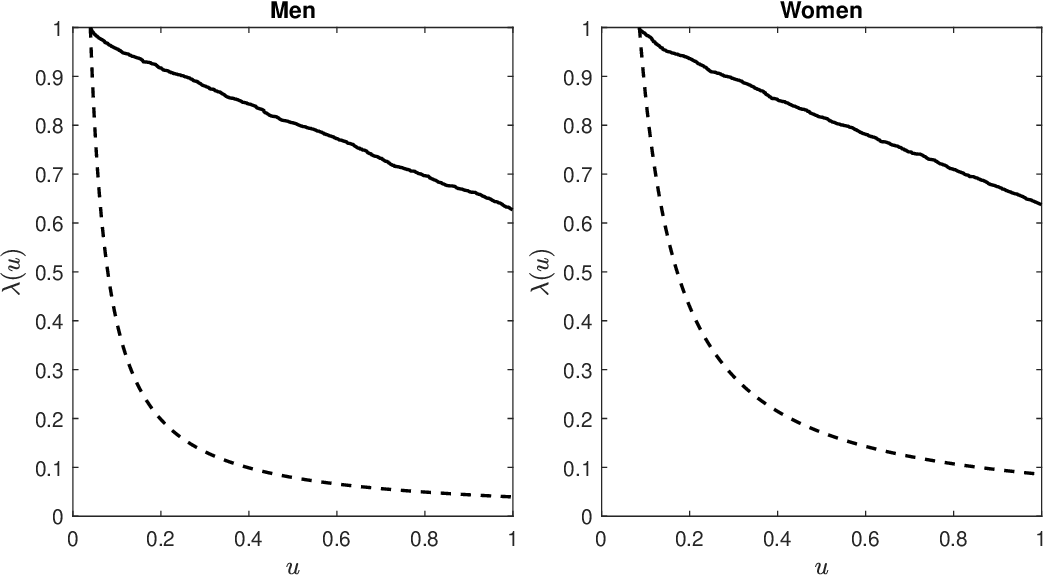}
	\caption{(Solid) Estimates of $\lambda(u)$ on $(0,1]$; (Dashed) Expected values of the estimators if the data were identically distributed.}
	\label{fig:lambdafuncplot}
\end{figure}

Figure \ref{fig:lambdafuncplot} displays the estimated function $\widehat{\lambda}(u)$, $0<u\leq 1$, represented by solid lines, based on multiple records from all athletes, for $k=1000$. (For this purpose we removed athletes with single records: 71 men (0.28\% of the data) and 20 women (0.17\% of the data).)   If the observations were homogeneous and hence exchangeable, one would expect the estimates to align closely with their expected values:  
\begin{equation*}
	\min\left( \frac{\lceil k/u\rceil -1}{n-1}, 1\right),\quad 0<u\leq 1.
\end{equation*}
Clearly this is not the case.
In contrast, our estimates decay almost linearly, showing similar patterns for men and women. The estimates of  $\lambda(1)$ yield the large values 0.63 for males and  and 0.64  for females, highlighting significant heterogeneity across athletes.

\subsection{Current World Records are Close to the Ultimates}
Figure \ref{fig:linearplotmaleandfemale} demonstrates how extreme value theory can be applied to extrapolate beyond the current 100-meter sprint records. The $x$-axis represents a power transformation of the rank of the time record (where the fastest time is ranked 1, and so on), raised to the exponent $-\gamma$. Here, $\gamma$ denotes the extreme value indices, which are estimated to be approximately $-0.20$ for men and $-0.17$ for women, based on the top $k/n \approx 5\%$ of observations. The $y$-axis represents the smoothed speed in km/h.

\begin{figure}[!h]
	\centering
	\includegraphics[width=1\linewidth]{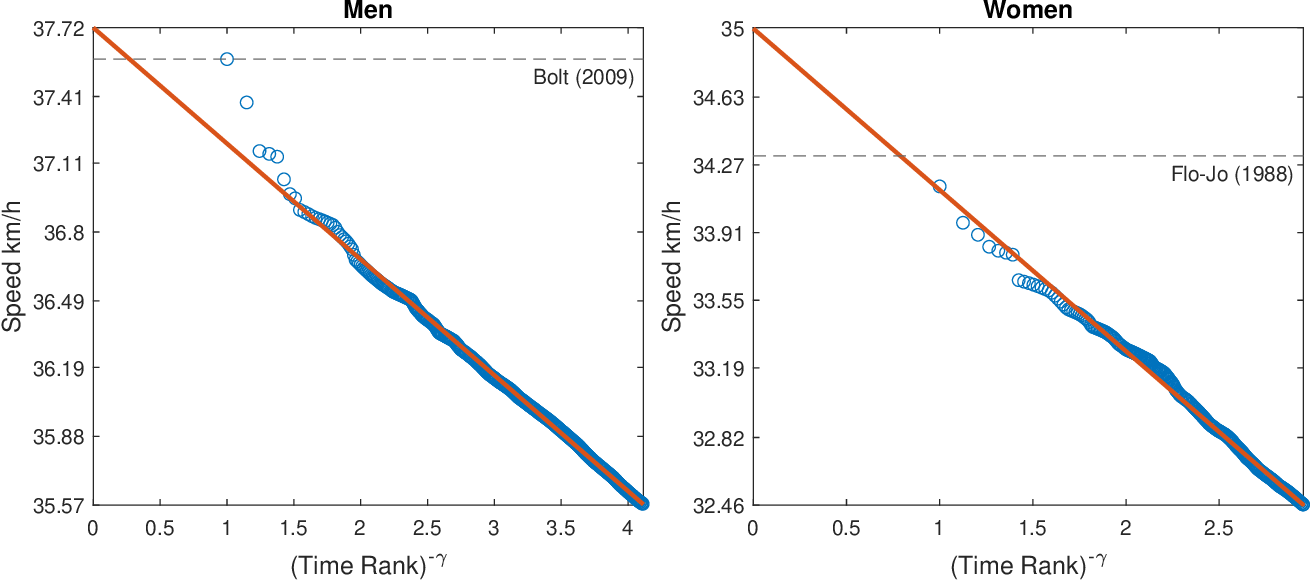}
	\caption{Linear extrapolation using extreme value theory. The \(x\)-axis represents the record rank raised to the power $-\widehat{\gamma}$, with solid lines showing fitted models and intercepts estimating the speed limit. Dashed lines mark current world records, noting that  Florence Griffith-Joyner's 1988 record predates our sample period.}
	\label{fig:linearplotmaleandfemale}
\end{figure}

To understand why the scatter plots exhibit a linear pattern, consider a large threshold $t = t_n=n/k$, which leads to the lower limit of the $y$-axis. The conditions 
 from Section \ref{sec:evt-estimator} provide the following approximations:  
\begin{equation}\label{eqn:DoA-limit-survival-2}
	(t T_n(a_t x + b_t))^{-\gamma}\approx(t T(a_t x + b_t))^{-\gamma} \approx 1 + \gamma x, \quad x > 0.
\end{equation}  
By substituting a large speed, say $z = a_t x + b_t$, we obtain:  
\begin{equation*}
	(t T_n(z))^{-\gamma} \approx 1 + \gamma \frac{z - b_t}{a_t},
\end{equation*}  
or equivalently,  
\begin{equation*}
	z \approx \frac{a_t}{\gamma} 
    k^\gamma (nT_n(z))^{-\gamma} + \left( b_t - \frac{a_t}{\gamma} \right).
\end{equation*}  
Approximating $nT_n(z)$   with $\sum_{i=1}^n \mathds{1}[X_i \geq z]$ as the rank of the time record (instead of the speed), and substituting the parameters $\gamma$, $a_{n/k}$, and $b_{n/k}$ with their estimators provided in \cite{HE2024}, the above expression simplifies to:  
\begin{equation*}
	\text{Speed} \approx \frac{\widehat a_{n/k}}{\widehat\gamma} k^{\widehat \gamma} \cdot (\text{Time Rank})^{-\widehat{\gamma}} + \widehat{x}^*, \quad 
    \widehat{x}^* = \widehat{b}_{n/k} - \frac{\widehat{a}_{n/k} }{\widehat \gamma},
\end{equation*}  
where $\widehat{x}^*$ is the endpoint estimator defined in \eqref{eqn:endpoint-estimator}. This explains the approximate straight line (linear pattern) discussed above.  Observe that setting the Time Rank equal to 0, yields the endpoint estimator, the upper limit of the $y$-axis. 

For both men and women, the scatter plots in Figure \ref{fig:linearplotmaleandfemale} demonstrate a clear linear pattern, aligning well with these extreme value approximations. The solid lines represent the fitted relationships obtained using our estimators. Importantly, the intercepts of these lines yield estimates for the speed limits: 37.72 km/h for men and 35.00 km/h for women, which correspond to time limits of 9.54 seconds and 10.29 seconds for the 100-meter sprint, respectively.

To assess the sensitivity of our analysis to the choice of the number of tail observations  used in the estimation, Figures \ref{fig:fitk} present our estimates of the extreme value index (left) and the ultimate record in seconds (right) as functions of $k$, ranging from 3\% to 7\% of the total number of observations, for male and female athletes, respectively. The dotted lines represent the estimated values. Overall, the results show that the estimates remain relatively stable across the range of $k$, with some instability for small $k$  for the women's results; however, the estimates stabilize quickly as $k$ increases.

\begin{figure}[!h]
	\centering
	\includegraphics[width=1\linewidth]{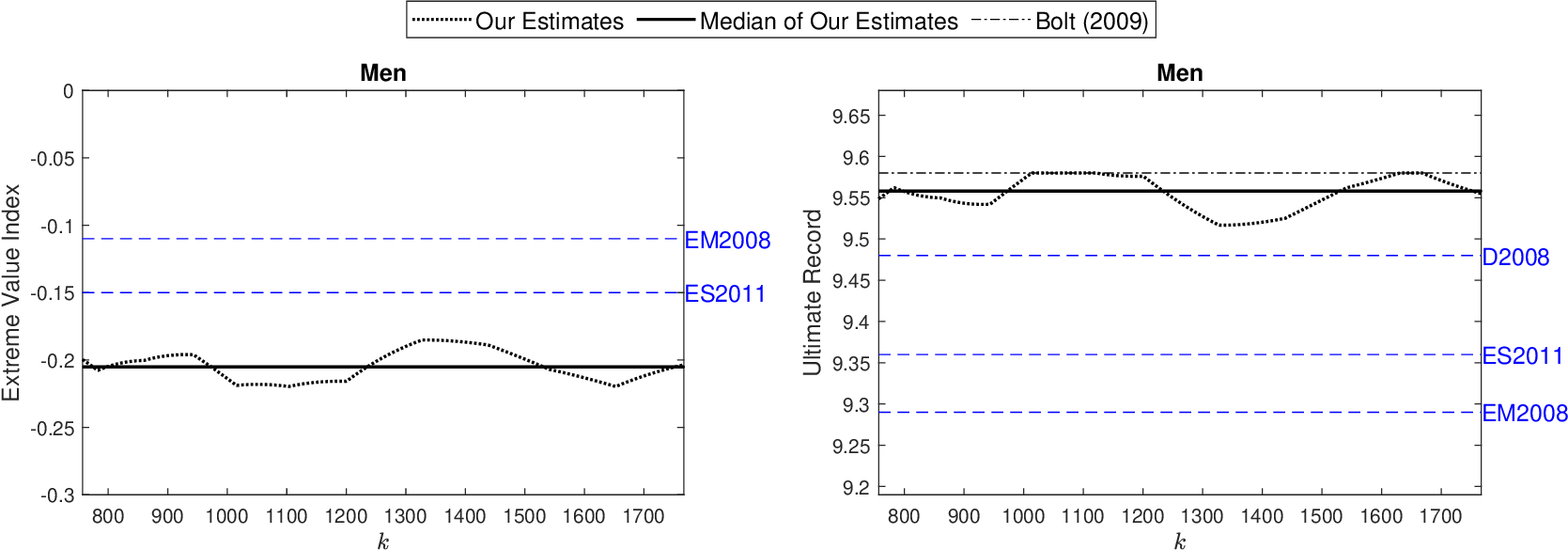}
	\includegraphics[width=1\linewidth]{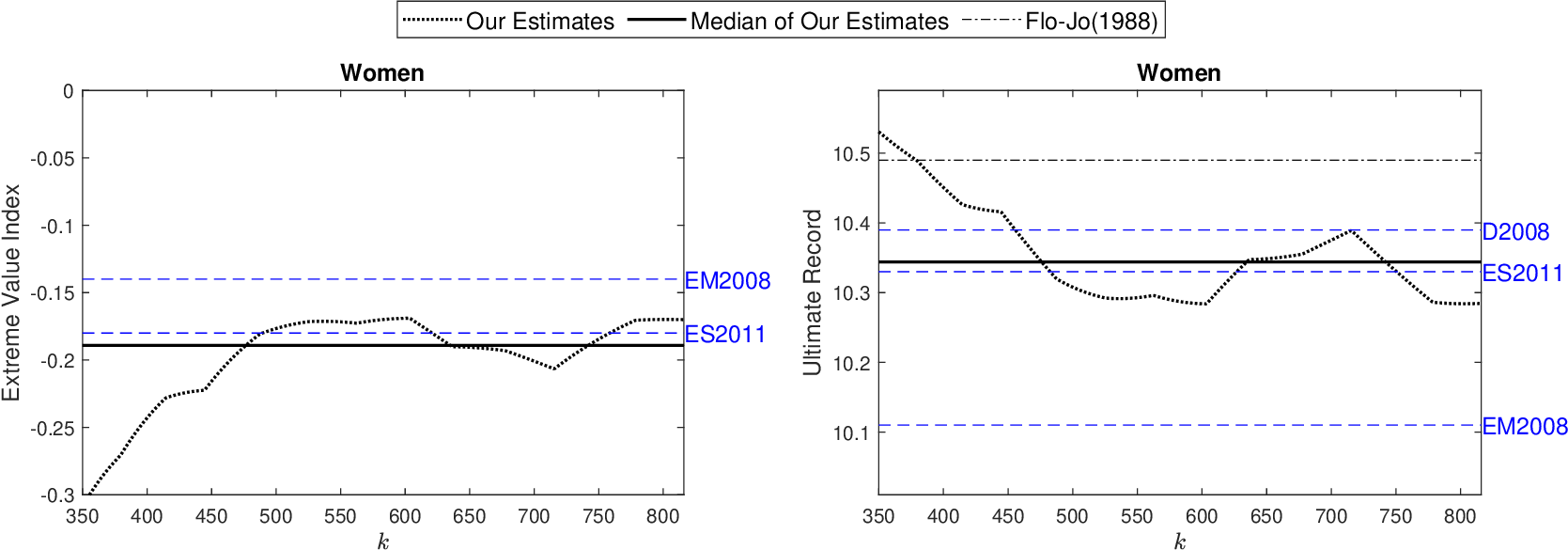}
	\caption{Estimates of the extreme value index (left) and ultimate record (right) are shown as dotted lines varying with \(k\); the solid line is their median. Dashed lines represent earlier estimates, while the dash-dotted line marks current world records, with Florence Griffith-Joyner's 1988 record predating our sample period.}
	\label{fig:fitk}
\end{figure}

To summarize, the median (solid line) of our estimates for the ultimate records are \textbf{9.56} seconds for men and \textbf{10.34} seconds for women. Unlike the previous estimates in EM2008, \cite{ES2011}, and \cite{D2008}, as shown in Figure \ref{fig:fitk}, our findings indicate that the current world record for men is very close to the human limit. Specifically, Usain Bolt's record is nearly ultimate, with potential improvements limited to approximately 0.02 seconds. In contrast, there is a larger margin for improvement for female athletes, with potential gains of up to 0.15 seconds. Notably, we find that Florence Griffith-Joyner's longstanding record, though yet to be broken, remains within the human limit.

However, it is essential to consider the statistical uncertainties associated with these estimates. Figure \ref{fig:varlosskep} illustrates the percentage of variance reduction achieved by accounting for heterogeneity,  estimated with $\widehat\Delta$  in (\ref{Delta})  in the previous section, across different choices of $k$. For both men and women, the results show a consistent reduction of approximately 35\% in the asymptotic variance, compared to the (incorrect) variance used in i.i.d.\ models. This highlights the substantial improvement in precision gained by incorporating heterogeneity into the analysis.
\begin{figure}[!h]
	\centering
	\includegraphics[width=1\linewidth]{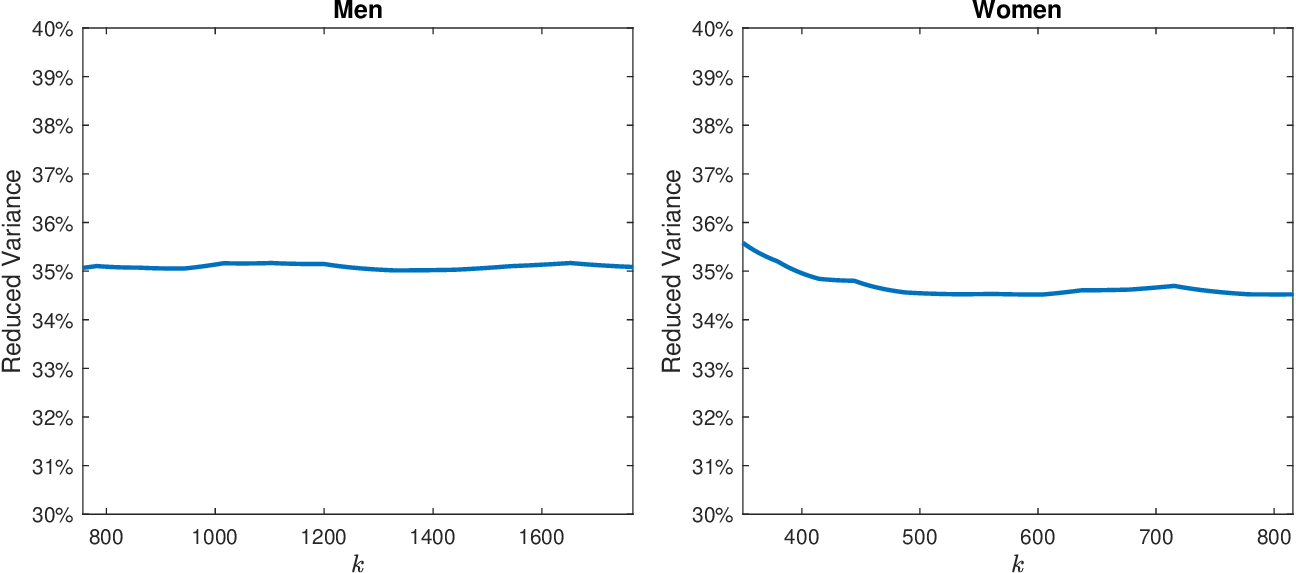}
	\caption{Percentage of reduced variance when accounting for athlete heterogeneity.}
	\label{fig:varlosskep}
\end{figure}

\begin{figure}[!h]
	\centering
	\includegraphics[width=1\linewidth]{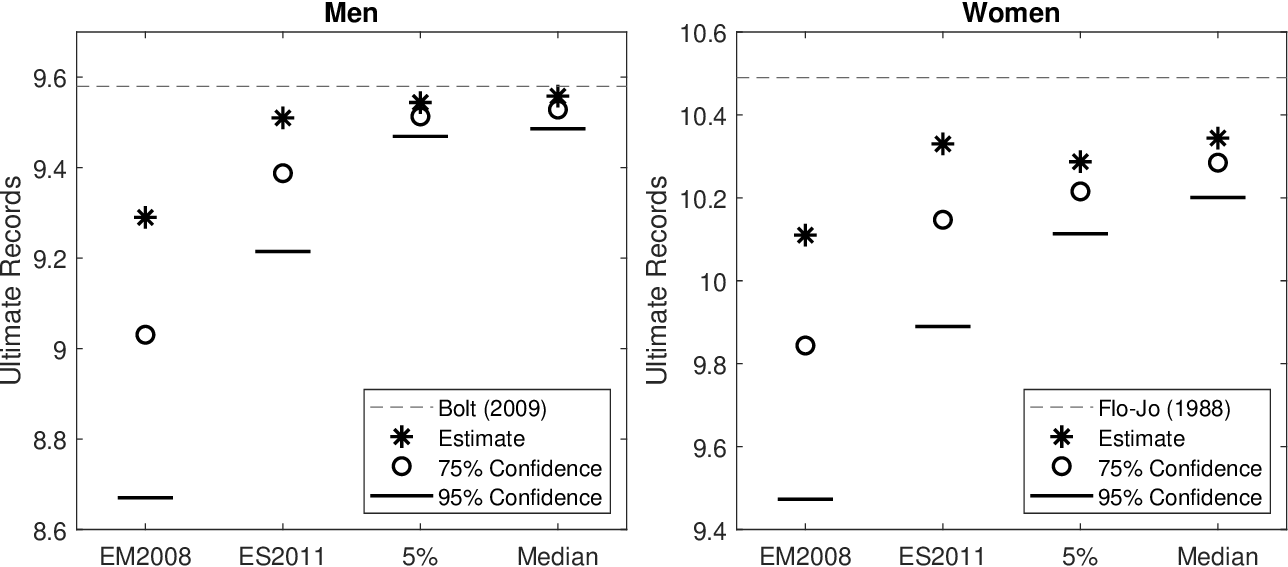}
	\caption{Comparison of our estimates (stars), 75\% (circles), and 95\% (lines) lower confidence bounds with previous ones. Dashed lines represent the current world records.}
	\label{fig:boxmaleandfemale}
\end{figure}
Figure \ref{fig:boxmaleandfemale} compares the results of  our analysis with those calibrated from the procedures in EM2008 and \cite{ES2011}. 
 To ensure a robust analysis, we present in the `5\%' and `Median' columns, the point estimates as described above and similarly two different confidence bounds: (1) the results using $k/n \approx 5\%$, as in Figure \ref{fig:linearplotmaleandfemale}, and (2) the median of the confidence bounds across $k$ values ranging from 3\% to 7\% of the sample size, as in Figure \ref{fig:fitk}. The confidence bounds  are given for the levels  75\% and 95\% for the four procedures. 

For both approaches, our lower confidence bounds are much closer to the point estimates (and substantially higher) than those reported in EM2008 and \cite{ES2011}, due to the larger dataset analyzed and the improvements achieved by accounting for heterogeneity across athletes. Using $k$ around 5\% of the sample size, the 95\% lower confidence limits for the ultimate records are estimated at 9.47 seconds for men and 10.11 seconds for women. The median approach slightly raises these bounds to \textbf{9.49} seconds for men and more substantially to \textbf{10.20} seconds for women.

\section{Summary and Conclusion}
We estimate the ultimate world records for men and women on the 100-meter dash using the novel theory of heterogeneous extreme value statistics. For men we estimate the ultimate world record to be 9.56 seconds. The present world record 9.58 of Usain Bolt in 2009 is very close to this estimate. Our estimate can alternatively be seen as a 50\% lower confidence bound on the ultimate world record, meaning that it is well possible that this estimated ultimate world record can be broken. More prudently, employing this heterogeneous extreme value theory and using multiple times per athlete we construct a very accurate asymptotic 95\% lower confidence bound on the ultimate world record of 9.49 seconds, still quite close to the present world record, that is, under the present conditions not much improvement is possible. For the women's 100-meter dash our point estimate of the ultimate world record is 10.34 seconds, again rather close to the very old 1988 world record 10.49 of Florence Griffith Joyner. The more prudent 95\% lower confidence bound on the women's ultimate world record is 10.20. The larger margin for women  of  0.29 seconds is partly due to the smaller sample size compared with men.

\appendix
\section{Appendix}
	Recall the notation from Subsection \ref{het}. One can extend the definition of $\lambda$ in  \eqref{lambda-func:general} to define the  function:
	\begin{equation}\label{defr}
		R(x,y)=\lim\limits_{n \rightarrow \infty} \frac{1}{k} \sum_{i=1}^{n} P\left( U_{i}^{(n)} < \frac{k}{n}x\right) P\left( U_{i}^{(n)} < \frac{k}{n} y \right),
	\end{equation}
	such that
	\begin{equation*}
		\lambda(u) \equiv R(1, 1/u),\quad u>0.
	\end{equation*}
	This function  $R$ exhibits the following  properties:
	\begin{enumerate}
		\item Monotonicity:  
		The  function $R$ is increasing in each of its coordinates.
		\item Homogeneity:  $R(ax, ay) = aR(x, y)$, for all $a, x, y > 0$.
		\item Symmetry:  $R(x, y) = R(y, x)$, for all $x, y > 0$.
	\end{enumerate}
	Similarly, one can extend the definition of the estimators \eqref{eqn:estimator-lambda-individual} and \eqref{eqn:estimator-lambda} as follows:
	\begin{gather*}
		\widehat{G}_{\ell}(x,y)=\frac{1}{m_{\ell}-1} \sum_{1 \leq i \neq j \leq m_{\ell}} \mathds{1} \left[R_{r_\ell+i} > n-kx, \, R_{r_\ell+j} > n-ky \right],\\
		\widehat{R}(x,y)=\frac{1}{k}\sum_{\ell=1}^{n}\widehat{G}_{\ell}(x,y);
	\end{gather*}
	then $\widehat{\lambda}(u)=\widehat{R}(1,1/u)$. 

    We will consider the uniform consistency of $\widehat R$ in the following lemma, which specializes to the uniform consistency of $\widehat\lambda $, given in (\ref{uclh}).

\begin{lem}\label{lem:R-func} Let $F_n$ be continuous and assume $\max_{\ell=1, \dots, p}m_\ell=O(1)$.
	Suppose the limit $R(x,y)$ in \eqref{defr} exists for all possible intermediate sequences in \eqref{k:p}. Now, let $k = k(n)$ be any particular sequence satisfing \eqref{k:p}. 
    Then for each  $M>0$, as $n \to \infty$,
	\begin{equation}\label{convr}
		\sup_{0\leq x,y\leq M}\left|\widehat{R}(x,y)-R(x,y) \right|\xrightarrow{P}0.
	\end{equation}
\end{lem}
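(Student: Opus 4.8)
The plan is to reduce the problem to a pointwise statement plus a monotonicity-based covering argument, in the style of the classical proof of the Glivenko--Cantelli theorem and the uniform consistency results for tail dependence estimators (cf.\ \cite{EHZ2016}). First I would fix $M>0$ and work on the compact square $[0,M]^2$. The key structural facts are: (i) for each fixed $(x,y)$, $\widehat R(x,y)$ is an average of bounded indicator-type statistics, so I expect $\widehat R(x,y)\xrightarrow{P} R(x,y)$ by a first- and second-moment computation; (ii) $\widehat R(\cdot,\cdot)$ is coordinatewise nondecreasing in $(x,y)$ (each summand $\mathds 1[R_{r_\ell+i}>n-kx]$ is nondecreasing in $x$), and $R(\cdot,\cdot)$ is continuous (being coordinatewise nondecreasing, homogeneous, and finite, hence locally Lipschitz away from $0$, and controlled near $0$ by $R(x,y)\le \tfrac12(R(x,x)+R(y,y))$ with $R(x,x)=x R(1,1)$ by homogeneity). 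Given a pointwise consistent, monotone sequence of functions converging to a continuous monotone limit, uniform convergence on the compact square follows from a standard grid argument: choose a finite $\varepsilon$-net $0=t_0<t_1<\dots<t_N=M$ so that $R$ oscillates by at most $\varepsilon$ across adjacent grid rectangles, sandwich $\widehat R(x,y)$ between $\widehat R(t_i,t_j)$ and $\widehat R(t_{i+1},t_{j+1})$ by monotonicity, and use pointwise convergence at the finitely many grid points.

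For the pointwise step (i), I would write $\widehat R(x,y)=\tfrac1k\sum_{\ell=1}^p \widehat G_\ell(x,y)$ and compute the expectation. Since $F_n$ is continuous, the ranks $R_{r_\ell+i}$ are (up to ties of probability zero) determined by where $X^{(n)}_{r_\ell+i}$ falls among all $n$ observations, and the events $\{R_{r_\ell+i}>n-kx\}$ are, in expectation, governed by the $U^{(n)}_i=1-F_n(X^{(n)}_i)$ variables: $E[\widehat G_\ell(x,y)]$ is, asymptotically, $\tfrac1{m_\ell-1}\sum_{i\ne j}P(U_{r_\ell+i}<\tfrac kn x,\,U_{r_\ell+j}<\tfrac kn y)$ up to an error coming from replacing the empirical tail by the deterministic one. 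Summing over $\ell$ and dividing by $k$, the definition \eqref{defr} (via the alternative pairwise representation analogous to \eqref{lambda-func-defn:estimable}) gives $E[\widehat R(x,y)]\to R(x,y)$. For the variance, the summands $\widehat G_\ell$ are independent across $\ell$ (data from different athletes are independent), each is bounded by $\max_\ell m_\ell=O(1)$, so $\operatorname{Var}(\widehat R(x,y))=\tfrac1{k^2}\sum_\ell \operatorname{Var}(\widehat G_\ell)=O(p/k^2)=O(1/k)\to 0$, and Chebyshev finishes the pointwise claim. The extension from $\widehat\lambda(u)=\widehat R(1,1/u)$ on $u\ge\delta$ to \eqref{uclh} is immediate by specializing $x=1$, $y\le 1/\delta$.

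The main obstacle I anticipate is controlling the difference between the rank-based events $\{R_{r_\ell+i}>n-kx\}$ and the "oracle" events $\{U^{(n)}_{r_\ell+i}<\tfrac kn x\}$ uniformly over $x\in[0,M]$, i.e.\ passing from the empirical tail quantile $X_{n-\lfloor kx\rfloor,n}$ to the deterministic level. This requires a uniform-in-$x$ bound on $\sup_{x\le M}|\tfrac1n\sum_i \mathds 1[U^{(n)}_i<\tfrac kn x]-\tfrac kn\cdot(\text{something})|$ — essentially a uniform law of large numbers for the averaged empirical tail under heterogeneity. I would handle this by applying the pointwise-plus-monotonicity scheme at this inner level as well: the map $x\mapsto \tfrac1n\sum_i\mathds 1[U^{(n)}_i<\tfrac kn x]$ is nondecreasing, its mean is $\tfrac1n\sum_i P(U^{(n)}_i<\tfrac kn x)$ which by assumption (the existence of $R$, taking $y\to\infty$ appropriately, together with condition (c)'s uniform domination $T_n\le MT$) behaves like $\tfrac kn$ times a continuous function of $x$, and its variance is $O(k/n^2)\cdot n = O(k/n)$... actually $O(1/n)$ after proper scaling — small enough that the grid argument applies. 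A careful but routine bookkeeping of these error terms, combined with the fact that $\max_\ell m_\ell$ is bounded (so no single athlete's block can distort the sums), then yields \eqref{convr}. I do not expect any genuinely new difficulty beyond organizing these two nested monotone-covering arguments cleanly.
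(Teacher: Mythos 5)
Your overall strategy coincides with the paper's: reduce to pointwise consistency and upgrade to uniformity via monotonicity of $\widehat R$ and continuity/homogeneity of $R$; replace the rank-based events by ``oracle'' events at deterministic thresholds $kx/n$; and control the random threshold through a law of large numbers for the intermediate order statistics of the $U_i^{(n)}$ (the paper writes $\widehat G_\ell(x,y)=\widetilde G_\ell\bigl(\tfrac{n}{k}U_{\lceil kx+1\rceil,n},\tfrac{n}{k}U_{\lceil ky+1\rceil,n}\bigr)$ and shows $\tfrac{n}{k}U_{\lceil kz+1\rceil,n}\xrightarrow{P}z$). However, your variance step contains a genuine gap. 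You bound each $\operatorname{Var}(\widehat G_\ell)$ by a constant (since $\widehat G_\ell\le\max_\ell m_\ell=O(1)$), obtain $\operatorname{Var}=O(p/k^2)$, and then assert this is $O(1/k)$. That last equality requires $p=O(k)$, which fails here: because $\max_\ell m_\ell=O(1)$ we have $p\asymp n$, and $k/n\to0$, so $p/k^2\asymp n/k^2$ need not vanish (take $k=n^{1/3}$). The crude bound ``bounded summands, hence $O(1)$ variance each'' is therefore not enough for a general intermediate sequence. The fix, which is what the paper does, is to exploit that the indicators are rarely one: writing $\widetilde G_\ell=\tfrac{1}{m_\ell-1}S_\ell$ with $S_\ell$ a sum of at most $m_\ell(m_\ell-1)$ indicators, Cauchy--Schwarz gives $S_\ell^2\le m_\ell(m_\ell-1)S_\ell$, hence $\operatorname{Var}(\widetilde G_\ell)\le(\max_\ell m_\ell)\,\mathbb{E}[\widetilde G_\ell]$, and since $\sum_\ell\mathbb{E}[\widetilde G_\ell]=k\cdot\bigl(\tfrac1k\sum_\ell\mathbb{E}[\widetilde G_\ell]\bigr)=O(k)$ (the normalized sum converges to the finite limit $R(x,y)$), one gets $\operatorname{Var}\bigl(\tfrac1k\sum_\ell\widetilde G_\ell(x,y)\bigr)\le\tfrac{\max_\ell m_\ell}{k}\,\mathbb{E}\bigl[\tfrac1k\sum_\ell\widetilde G_\ell(x,y)\bigr]\to0$.

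A second, smaller imprecision: for the inner law of large numbers you appeal to the existence of $R$ ``taking $y\to\infty$'' and to condition (c), but neither is needed and neither quite delivers what you want. By continuity of $F_n$ and the definition $U_i^{(n)}=1-F_n(X_i^{(n)})$ one has the exact identity $\sum_{i=1}^{n}P\bigl(U_i^{(n)}<\tfrac{k}{n}x\bigr)=kx$, so the mean of $\tfrac1k\sum_{i=1}^n\mathds{1}\bigl[U_i^{(n)}<\tfrac{k}{n}x\bigr]$ is exactly $x$ and, by independence, its variance is at most $x/k\to0$; this elementary computation (not the existence of $R$) is what yields $\tfrac{n}{k}U_{\lceil kz+1\rceil,n}\xrightarrow{P}z$ and closes the rank-to-oracle passage. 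With these two repairs your argument matches the paper's proof.
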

\begin{proof}
For (\ref{convr}) we only need to show the pointwise consistency of $\widehat{R}(x,y)$ for $x,y>0$; the uniform consistency follows by the monotonicity of $\widehat{R}$ and $R$ and the homogeneity of $R$ as in the proof of Theorem 7.2.1 in \cite{dHF2006}.
	
	Recall  $U_i^{(n)}=1-F_n(X_i^{(n)})$ and  let $U_{1,n}\leq U_{2,n}\ldots\leq U_{n,n}$ be the order statistics of the  $\{U_i^{(n)}\}$. Observe that, with probability 1, for each $\ell=1, \dots, p$, we have
	$$
		\widehat{G}_\ell(x,y)=\frac{1}{m_{\ell}-1} \sum_{1\leq i\neq j \leq m_{\ell}} 
		\mathds{1}\left[U_{r_\ell+i}^{(n)}<U_{\lceil kx+1\rceil,n},U_{r_\ell+j}^{(n)}<U_{\lceil ky+1\rceil,n} \right]
		= \widetilde G_\ell \left(\frac{n}{k}U_{\lceil kx+1\rceil,n},\frac{n}{k}U_{\lceil ky+1\rceil,n} \right),
	$$
	where
	\begin{equation*}
		\widetilde G_\ell (x,y)=\frac{1}{m_{\ell}-1} \sum_{1\leq i\neq j \leq m_\ell} 
		\mathds{1}\left[U_{r_\ell+i}^{(n)}<kx/n,U_{r_\ell+j}^{(n)}<ky/n \right].
	\end{equation*}
	Hence $$\widehat R(x,y)= \frac{1}{k} \sum_{\ell=1}^{p}
	\widetilde G_\ell \left(\frac{n}{k}U_{\lceil kx+1\rceil,n},\frac{n}{k}U_{\lceil ky+1\rceil,n} \right). $$
	For every $x,y>0$,
	\begin{align*}
		\mathbb{E} \,\frac{1}{k} \sum_{\ell=1}^{p}
		\widetilde G_\ell \left(x,y\right)
		=&\frac{1}{k} \sum_{\ell=1}^{p} \frac{1}{m_{\ell}-1}\sum_{1\leq i\neq j \leq m_{\ell}} P (U_{r_\ell+1}^{(n)}<kx/n) P (U_{r_\ell+1}^{(n)}<ky/n) 
		\\
		=& \frac{1}{k}\sum_{i=1}^{n}P\left(U_i^{(n)}<kx/n \right) P\left(U_i^{(n)}<ky/n \right)
		\rightarrow R(x,y),\end{align*}
	where the last step holds by definition, and
	\begin{align*}
		\text{var}\left(\frac{1}{k} \sum_{\ell=1}^{p}
		\widetilde G_\ell \left(x,y\right) \right)
		=& \frac{1}{k^2}\sum_{\ell=1}^{p}\frac{1}{(m_{\ell}-1)^2}\cdot\text{var}
		\left(\sum_{1\leq i\neq j \leq m_{\ell}}\mathds{1}\left[U_{r_\ell+i}^{(n)}<kx/n,U_{r_\ell+j}^{(n)}<ky/n \right]\right)\\
		\leq&\frac{1}{k^2}\sum_{\ell=1}^{p}\frac{1}{(m_{\ell}-1)^2}\cdot (m_{\ell}(m_{\ell}-1))^2\,P (U_{r_\ell+1}^{(n)}<kx/n) P (U_{r_\ell+1}^{(n)}<ky/n)
		\\
		\leq &\frac{\max_{\ell}m_{\ell}}{k}\,\mathbb{E}\,\frac{1}{k} \sum_{\ell=1}^p
		\widetilde G_\ell \left(x,y\right)\rightarrow 0.
	\end{align*}
	 	This implies that $\frac{1}{k} \sum_{\ell=1}^{n}
	\widetilde G_\ell \left(x,y\right)\xrightarrow{P}R(x,y)$ pointwise. The convergence is then uniform by the continuity of $R$ and the monotonicity of $\frac{1}{k} \sum_{\ell=1}^{n}
	\widetilde G_\ell $ and $R$. 
    
    It remains to show that
	\begin{equation*}\label{eqn:R-consistency-margins}
		\frac{n}{k}U_{\lceil kz+1\rceil,n}\xrightarrow{P}z,
	\end{equation*}
	which is done in the Proof of Theorem 2.3 in the  Supplementary Material of \cite{EH2023}, for positive extreme value index~$\gamma$.  However, it is elementary to  prove it directly, without using $\gamma$. We will omit the details.
\end{proof}


\begin{thebibliography}{99}

\bibitem[Dekkers et al.\!\!\! (1989)]{DEdH1989}
	Dekkers, A. L. M., Einmahl, J. H. J., and de Haan, L. (1989), ``A Moment Estimator for the Index of an Extreme-Value Distribution,'' 
	\textit{The Annals of Statistics}, \textbf{17}, 1833--1855.

	\bibitem[Denny(2008)]{D2008}
	Denny, M. W. (2008), ``Limits to Running Speed in Dogs, Horses and Humans,'' 
	\textit{Journal of Experimental Biology}, \textbf{211}, 3836--3849.

    	\bibitem[Einmahl et al.\!\!\! (2016)]{EHZ2016}
	Einmahl, J. H. J., de Haan, L., and Zhou, C. (2016), ``Statistics of Heteroscedastic Extremes,'' 
	\textit{Journal of the Royal Statistical Society, Series B: Statistical Methodology}, \textbf{78}, 31--51.

    
	\bibitem[Einmahl and He(2023)]{EH2023}
	Einmahl, J. H. J., and He, Y. (2023), ``Extreme Value Inference for Heterogeneous Power Law Data,'' 
	\textit{The Annals of Statistics}, \textbf{51}, 1331--1356.
	
		\bibitem[Einmahl and Magnus(2008)]{EM2008}
	Einmahl, J. H. J., and Magnus, J. R. (2008), ``Records in Athletics Through Extreme-Value Theory,'' 
	\textit{Journal of the American Statistical Association}, \textbf{103}, 1382--1391.
	
	\bibitem[Einmahl and Smeets(2011)]{ES2011}
	Einmahl, J. H. J., and Smeets, S. G. W. R. (2011), ``Ultimate 100-m World Records Through Extreme-Value Theory,'' 
	\textit{Statistica Neerlandica}, \textbf{65}, 32--42.
	
	


    \bibitem[de Haan and Ferreira(2006)]{dHF2006}
	de Haan, L., and Ferreira, A. (2006), \textit{Extreme Value Theory: An Introduction}, New York: Springer.
	
	
	\bibitem[He and Einmahl(2024)]{HE2024}
	He, Y., and Einmahl, J. H. J. (2024), ``Extreme Value Inference for General Heterogeneous Data,'' Preprint available at \url{http://dx.doi.org/10.2139/ssrn.4816397}.
	

	
	

	
\end{thebibliography}
\end{document}